\documentclass[reqno]{amsart}

\usepackage{amsfonts}
\usepackage{amsmath}
\usepackage{amssymb}


\usepackage{graphicx}
\usepackage{subfig}
\usepackage[update,prepend]{epstopdf}
\usepackage{url}


\setcounter{MaxMatrixCols}{10}

\newtheorem{theorem}{Theorem}
\theoremstyle{plain}


\AtBeginDocument{{\noindent\small
This is a preprint of a paper whose final and definite form is with 
\emph{Commun. Fac. Sci. Univ. Ank. Ser. A1 Math. Stat.}, ISSN: 1303-5991. 
Submitted 27 Sept 2016; Article revised 05 Apr 2017;
Article accepted for publication 01 May 2017.} 
\vspace{9mm}}


\begin{document}

\title[A SEIR model for Ebola virus with demographic effects]{Analysis, simulation 
and optimal control\\ of a SEIR model for Ebola virus\\ with demographic effects}


\author{Amira Rachah}
\address{Amira Rachah: Department of Production Animal Clinical Sciences,
Norwegian University of Life Sciences, PO Box 8146, NO-0033 Oslo, Norway.}
\email{amira.rachah@nmbu.no}


\author{Delfim F. M. Torres}
\address{Delfim F. M. Torres: Center for Research 
and Development in Mathematics and Applications (CIDMA),
Department of Mathematics, University of Aveiro, 3810-193 Aveiro, Portugal.}
\email{delfim@ua.pt}


\date{Received: September 27, 2016, Revised: April 05, 2017; Accepted: May 01, 2017.}

\subjclass[2010]{Primary 49N90, 92D30; Secondary 49K15, 92D25}

\keywords{SEIR models, Ebola, demographic effects, vital dynamics,
induced death rates, equilibria, optimal control, vaccination, Liberia}


\begin{abstract}
Ebola virus is one of the most virulent pathogens for humans.
We present a mathematical description of different 
Susceptible--Exposed--Infectious--Recovered (SEIR) models.
By using mathematical modeling and analysis, the latest major 
outbreak of Ebola virus in West Africa is described. Our aim 
is to study and discuss the properties of SEIR models with 
respect to Ebola virus, the information they provide, and 
when the models make sense. We added to the basic SEIR model 
demographic effects in order to analyze the equilibria with 
vital dynamics. Numerical simulations confirm the theoretical 
analysis. The control of the propagation of the virus through 
vaccination is investigated and the case study 
of Liberia is discussed in detail.
\end{abstract}


\maketitle


\section{Introduction}

Ebola is a deadly virus that attacks healthy cells and replicates itself
in a host's body. The virus, previously known as Ebola hemorrhagic fever,
is the deadliest pathogen for humans and has recently affected
several African countries. Discovered in 1976 in Central Africa, the recent
outbreaks affected the more heavily populated countries of West Africa
\cite{barraya,tara1}. Early symptoms of Ebola include: fever, headache, joint
and muscle aches, sore throat, and weakness. Later symptoms include diarrhea,
vomiting, stomach pain, hiccups, rashes, bleeding, and organ failure.
When Ebola progresses to external and internal bleeding, it is almost
always fatal \cite{alton,legrand,anon2,okwar,anon1}.
Ebola virus is transmitted initially to human by contact with an infected
animal's body fluid. Ebola is most commonly spread by contact with blood
and secretions, either via direct contact (through broken skin or mucous
membranes in, e.g., the eyes, nose, or mouth) with the infected
individual or fluids on clothing or other surfaces, as well as needles
\cite{borio,edward,dowel,peter,tara2}.

Epidemic models date back to the early twentieth century,
to the 1927 work by Kermack and McKendrick, whose models
were used to study the plague and cholera epidemics
\cite{Kermack:McKendrick:I,Kermack:McKendrick:II}.
Epidemic modeling is nowadays a powerful tool for investigating human infectious
diseases, such as Ebola, contributing to the understanding of the dynamics
of virus, providing useful predictions about the potential transmission
of the virus and the effectiveness of possible control measures,
which can provide valuable information for public health policy makers
\cite{diekman,gaff,kretz,longini,MR2719552,delf}.

The most commonly implemented models in epidemiology
are the SIR and SEIR models. The SIR model consists of three compartments:
Susceptible individuals $S$, Infectious individuals $I$,
and Recovered individuals $R$. In many infectious diseases
there is an exposed period after the transmission of the infection
from susceptible to potentially infective members, but before these potential
infective can transmit infection. Then an extra compartment is introduced,
the so called exposed class $E$, and we use compartments $S$, $E$, $I$ and $R$
to give a generalization of the basic SIR model \cite{brauer}.
When analyzing a new outbreak, researchers usually start with the 
basic SIR and SEIR models to fit the available outbreak data, obtaining 
estimates for the parameters of the model. Only after that, more complicated
models may be considered \cite{brauer}. In case of Ebola, the SIR model 
has already been deeply explored in the literature \cite{MyID:321,MyID:336}.
For a case study of the Ebola virus in Guinea, through a SIR model, 
we refer to \cite{MyID:344}. The results obtained by SIR models
are good enough, taking into account their simplicity. However, 
the transmission of Ebola virus is better described by a SEIR model. 
This is because it takes a certain time for an infected individual 
to become infectious. During that period of time, such individuals 
are in the exposed/latent compartment. A mathematical description 
of the spread of Ebola virus based on the basic SEIR model 
has been carried out, e.g. in \cite{SEIR:Mamo:Koya,MyID:340}. Discrete SEIR
time models to Ebola epidemics are available in \cite{Ebola:Bartlett}. 
In our work, we are interested in continuous time models, 
which are more common with respect to Ebola modeling  
\cite{Ebola:Atangana,Ebola:Boujakjian,SEIR:Mamo:Koya}.
In \cite{Ebola:Atangana}, the homotopy decomposition
method is used to solve a system of equations
modeling Ebola hemorrhagic fever involving the
so called beta derivative, which can be 
considered as the fractional order of the system. 
In our case, we deal with classical derivatives 
and standard integer-order systems. For fractional
modeling of Ebola, see also \cite{Area1}.
Our main control measure for the propagation of the virus is
vaccination. For the use of quarantine as a control measure
to limit the transmission of the Ebola virus using a SEIR model,
we refer the interested reader to \cite{Ebola:Boujakjian}. 
For a comparison study between the basic SIR and SEIR models 
to describe an Ebola outbreak, and for the conclusion 
of superiority of the SEIR model, we refer the reader to \cite{MyID:331}. 
It turns out that in available Ebola studies with SEIR models, 
the population is assumed to be constant: 
see Remark~1 of \cite{MyID:340} and Proposition~2.2 
of \cite{MyID:331}. This assumption is far from being true 
in West African countries. For example, in Liberia, the birth rate 
is approximately four times the death rate \cite{indexmundi}. 
Motivated by this fact, here our main aim is to study the latest 
major outbreak of Ebola virus occurred in Liberia 
through an appropriate SEIR model with vital dynamics, 
which takes into account the demographic effects on the population.
This is in contrast with all available results in the literature.

The paper is organized as follows. In Section~\ref{sec:2}, we recall
the basic mathematical SEIR model to describe the dynamics of the Ebola virus
that recently affected West Africa. After the modeling,
we analyze mathematically the SEIR model. In Section~\ref{sec:3},
we propose and analyze a new SEIR model with vital dynamics 
by adding demographic effects (Section~\ref{sec:3.1}). 
The equilibria of the model are studied in Section~\ref{subsec:24}.
Then, a numerical simulation is presented, which 
confirms the theoretical analysis (Section~\ref{sec:Sim:SEIR:de}).
After numerical resolution of the model, in Section~\ref{sec:4} 
we control the propagation of the virus through vaccination, 
reducing the number of infected individuals while taking into account 
the cost of vaccination. Finally, in Section~\ref{sec:5}, we propose
and investigate a more general model with demographic effects, for which 
there is an increase of the death rates for the exposed and infectious classes.
We show that such model describes well the outbreak of Ebola virus occurred 
in Liberia in 2014 \cite{WHO:Lib:2014}. We end with Section~\ref{sec:conc} 
of conclusions.


\section{Formulation of the basic SEIR model}
\label{sec:2}

In this section, we briefly recall the analysis of the properties 
of the basic SEIR system of equations that has been used to 
describe the recent outbreak of Ebola virus in West Africa
\cite{MyID:340,MyID:331}. The description of the transmission
of Ebola virus by the SEIR model is based on the subdivision
of the population into four compartments:
\begin{itemize}

\item Susceptible compartment $S(t)$, which denotes individuals who are
susceptible to catch the virus, and so might become infectious if exposed.

\item Exposed compartment $E(t)$, which denotes the individuals who are infected
but the symptoms of the virus are not yet visible.

\item Infectious compartment $I(t)$, which denotes infectious individuals
who are suffering the symptoms of Ebola and are able to spread the virus
through contact with the susceptible class of individuals.

\item Recovered compartment $R(t)$, which denotes individuals who have immunity
to the infection and, consequently, do not affect the transmission dynamics,
in any way, when in contact with other individuals.
\end{itemize}
The SEIR model is an extension of the simpler SIR model 
\cite{SEIR:Mamo:Koya,MyID:321,MyID:336,MyID:344}.
The particularity of the SEIR model is in the exposed compartment,
which is characterized by infected individuals that cannot communicate
yet the virus. These individuals are in the so called latent period \cite{brauer}.
For Ebola virus, this stage makes all sense since
it takes a certain time for a susceptible individual at time $t$,
denoted by $S(t)$, to enter the Infectious compartment $I(t)$.
Because the recovered individuals $R(t)$ have immunity to the infection,
they do not affect the transmission dynamics in any way when in contact with
other individuals. Figure~\ref{fig1:SEIR} shows the diagrammatic
representation of virus progress in an individual.
\begin{figure}[ht]
\centering
\includegraphics[width=12cm]{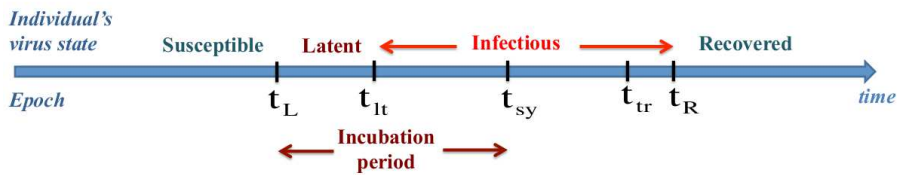}
\caption{Ebola virus progress in an individual by using the SEIR model,
where infectious occurs at $t_L$, latency to infectious transition
at $t_{lt}$, symptoms appear at $t_{sy}$, first transmission to another
susceptible at $t_{tr}$, and individual is no longer infectious
(recovered) at $t_R$. \label{fig1:SEIR}}
\end{figure}
The transmission of the virus is then described by the following
system of nonlinear ordinary differential equations:
\begin{equation}
\label{eq1:SEIR}
\begin{cases}
\dfrac{dS(t)}{dt} = -\beta S(t)I(t),\\[0.25cm]
\dfrac{dE(t)}{dt} = \beta S(t)I(t) - \gamma E(t),\\[0.25cm]
\dfrac{dI(t)}{dt} = \gamma E(t) - \mu I(t),\\[0.25cm]
\dfrac{dR(t)}{dt} = \mu I(t),
\end{cases}
\end{equation}
where $\beta \geq 0$ is the transmission rate; $\gamma \geq 0$ is the infectious
rate; and $\mu \geq 0$ is the recovery rate.
The initial conditions are given:
\begin{equation*}
S(0)=S_0>0,
\quad E(0)=E_0 \geq 0,
\quad I(0)=I_0>0,
\quad R(0)=0.
\end{equation*}
From \eqref{eq1:SEIR}, we see that
$\dfrac{d}{dt} \left[ S(t) + E(t) + I(t) + R(t) \right] = 0$, that is,
the population $N$ is constant along time:
\begin{equation*}
S(t) + E(t) + I(t) + R(t) =  N
\end{equation*}
for any $t \geq 0$.


\section{SEIR model with demographic effects}
\label{sec:3}

In the well-known basic SEIR model of Section~\ref{sec:2},
one ignores the demographic effects on the population.
In this section, we study a model with vital dynamics
by considering the birth and death rates.
Such model is new in the Ebola context
\cite{Area1,MyID:321,MyID:336,MyID:344,MyID:340,MyID:331,MR3356525}.


\subsection{Model formulation}
\label{sec:3.1}

We expand the SEIR model by including demographic effects: we assume
a constant birth rate $\delta$ and a natural death rate $\lambda$, obtaining
\begin{equation}
\label{eq1:SEIR_BD}
\begin{cases}
\dfrac{dS(t)}{dt} = \delta N  -\beta S(t)I(t) -\lambda S(t),\\[0.25cm]
\dfrac{dE(t)}{dt} = \beta S(t)I(t) - \gamma E(t) - \lambda E(t),\\[0.25cm]
\dfrac{dI(t)}{dt} = \gamma E(t) - \mu I(t) - \lambda I(t),\\[0.25cm]
\dfrac{dR(t)}{dt} = \mu I(t) - \lambda R(t).
\end{cases}
\end{equation}
Figure~\ref{fig2:SEIR} shows the relationship between the variables
of system \eqref{eq1:SEIR_BD}, which describes the SEIR model with
vital dynamics, that is, with demographic effects (birth and death).
\begin{figure}[ht]
\centering
\includegraphics[scale=0.4]{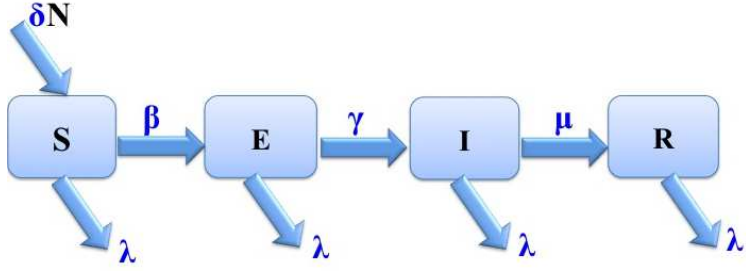}
\caption{Compartment diagram of the SEIR model
\eqref{eq1:SEIR_BD} with vital dynamics.
\label{fig2:SEIR}}
\end{figure}


\subsection{Analysis of the equilibria}
\label{subsec:24}

Let us find the equilibria points of the system of equations
\eqref{eq1:SEIR_BD} that describes the model. By setting
the right-hand side of \eqref{eq1:SEIR_BD} to zero, we get
\begin{equation}
\label{eq1:SEIR_BDeq}
\delta N  -\beta SI -\lambda S = 0,
\end{equation}
\begin{equation}
\label{eq2:SEIR_BDeq}
\beta SI - \gamma E  -\lambda E = 0,
\end{equation}
\begin{equation}
\label{eq3:SEIR_BDeq}
\gamma E - \mu I  -\lambda I = 0,
\end{equation}
\begin{equation}
\label{eqR:SEIR_BDeq}
\mu I  -\lambda R = 0.
\end{equation}
By adding \eqref{eq1:SEIR_BDeq} and \eqref{eq2:SEIR_BDeq}, we obtain that
$\delta N  -\lambda S - \left(\gamma + \lambda\right)E = 0$. Then,
\begin{align}
\label{eq4:SEIR_BDeq}
S = \dfrac{ \delta N -\left(\gamma + \lambda\right)E}{\lambda}.
\end{align}
From \eqref{eq3:SEIR_BDeq} we obtain that
\begin{align}
\label{eq5:SEIR_BDeq}
I = \dfrac{\gamma E}{\mu + \lambda}\, ,
\end{align}
while from \eqref{eqR:SEIR_BDeq} it follows that
\begin{align}
\label{eqR1:SEIR_BDeq}
R = \dfrac{\mu I}{\lambda}.
\end{align}
From \eqref{eq2:SEIR_BDeq}, \eqref{eq4:SEIR_BDeq} and \eqref{eq5:SEIR_BDeq},
we get
\begin{align*}
E \left( \frac{\beta \gamma \left(\delta N
- (\gamma +\lambda) E \right)}{\lambda\left(\mu +\lambda\right)}
- \left(\gamma +\lambda\right)\right)  = 0.
\end{align*}
Therefore, or $E=0$ or
\begin{equation}
\label{eq:Edif0}
E = \dfrac{\beta\gamma\delta N
- \left(\gamma + \lambda\right)\left(\mu + \lambda\right)\lambda}
{\beta \left(\gamma + \lambda\right)\gamma}.
\end{equation}
For $E=0$, from \eqref{eq4:SEIR_BDeq} we obtain that $S=\dfrac{\delta N}{\lambda}$
while from \eqref{eq5:SEIR_BDeq} we get $I=0$. It follows from \eqref{eqR1:SEIR_BDeq}
that $R = 0$. We just proved that there is a virus free equilibrium given by
$P_0=\left(\dfrac{\delta N}{\lambda},0,0,0\right)$.
From \eqref{eq:Edif0} we know that there is another equilibrium with
\begin{align}
\label{eq7:SEIR_BDeq}
E^* =  \dfrac{\delta N }{\gamma + \lambda}
- \dfrac{\lambda\left(\mu +\lambda\right)}{\beta \gamma}.
\end{align}
By using $E$ given by \eqref{eq7:SEIR_BDeq} in \eqref{eq4:SEIR_BDeq}, we get
\begin{align}
\label{eq8:SEIR_BDeq}
S^*=\dfrac{\left(\gamma +\lambda\right)\left(\mu+\lambda\right)}{\beta \gamma},
\end{align}
by substituting \eqref{eq7:SEIR_BDeq} in \eqref{eq5:SEIR_BDeq}, we obtain that
\begin{align}
\label{eq9:SEIR_BDeq}
I^* = \dfrac{\gamma \delta N }{\left(\mu +\lambda\right)\left(\gamma +\lambda\right)}
 - \dfrac{\lambda}{\beta},
\end{align}
and finally using \eqref{eq7:SEIR_BDeq} in \eqref{eqR1:SEIR_BDeq} we get
\begin{align}
\label{eq:SEIR_BDeq:R*}
R^* = \dfrac{\mu}{\lambda}I^* = \dfrac{\mu}{\lambda}\left[
\dfrac{\gamma \delta N }{\left(\mu +\lambda\right)\left(\gamma +\lambda\right)}
 - \dfrac{\lambda}{\beta}
 \right].
\end{align}
We just obtained the second equilibrium point $P^*=\left(S^*,E^*,I^*,R^*\right)$ given by expressions
\eqref{eq7:SEIR_BDeq}--\eqref{eq:SEIR_BDeq:R*}. 

\begin{theorem}
\label{thm1}
Let $S(t)$, $E(t)$, $I(t)$, $R(t)$  
be a solution of the SEIR model \eqref{eq1:SEIR_BD}. 
Then the basic reproduction ratio is given by 
 \begin{align}
\label{eq:R0}
R_0 := \dfrac{\beta \gamma \delta N}{\left(\mu 
+\lambda\right)\left(\gamma +\lambda\right)\lambda}.
\end{align}
\begin{itemize}
\item If $R_0>1$, then the equilibrium $P^*=\left(S^*,E^*,I^*,R^*\right)$
of the virus is obtained, in agreement with expressions
\eqref{eq7:SEIR_BDeq}--\eqref{eq:SEIR_BDeq:R*},
and the virus is able to invade the population.

\item If $R_0<1$, then the disease free equilibrium
$P_0=\left(\dfrac{\delta N}{\lambda},0,0,0\right)$ of the virus is obtained,
which corresponds to the case when the virus dies out (no epidemic).
\end{itemize}
\end{theorem}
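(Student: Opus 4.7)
The plan is to establish formula \eqref{eq:R0} for $R_{0}$ by the next-generation matrix method of van den Driessche and Watmough, and then deduce the threshold dichotomy from the explicit expressions \eqref{eq7:SEIR_BDeq}--\eqref{eq:SEIR_BDeq:R*} combined with a linearised stability analysis at the two equilibria.

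First I identify the infected classes as $E$ and $I$ and split the corresponding right-hand sides of \eqref{eq1:SEIR_BD} into a new-infection part $\mathcal{F}=(\beta S I,\,0)^{T}$ and a transfer part $\mathcal{V}=((\gamma+\lambda)E,\,-\gamma E+(\mu+\lambda)I)^{T}$. Linearising at the disease-free equilibrium $P_{0}=(\delta N/\lambda,0,0,0)$ gives
\[
F=\begin{pmatrix}0 & \beta\delta N/\lambda\\ 0 & 0\end{pmatrix},\qquad
V=\begin{pmatrix}\gamma+\lambda & 0\\ -\gamma & \mu+\lambda\end{pmatrix},
\]
and since $V$ is lower triangular its inverse is immediate. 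A short computation shows that $FV^{-1}$ has a single nonzero eigenvalue equal to the right-hand side of \eqref{eq:R0}; biologically this counts the expected number of secondary exposures $\beta S_{0}/(\mu+\lambda)$ produced by one infective during its infectious lifetime, multiplied by the probability $\gamma/(\gamma+\lambda)$ that a newly exposed individual survives demographic death long enough to become infectious.

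Next I verify when the endemic point $P^{*}$ is biologically admissible: from \eqref{eq7:SEIR_BDeq} one sees that $E^{*}>0$ if and only if $\beta\gamma\delta N>\lambda(\gamma+\lambda)(\mu+\lambda)$, i.e.\ if and only if $R_{0}>1$, and by \eqref{eq5:SEIR_BDeq} and \eqref{eqR1:SEIR_BDeq} the same inequality governs positivity of $I^{*}$ and $R^{*}$. To complete the argument I examine the Jacobian of \eqref{eq1:SEIR_BD} at each equilibrium. At $P_{0}$ the Jacobian is block triangular, with two decoupled eigenvalues $-\lambda$ coming from the $S$ and $R$ rows and with the remaining two eigenvalues determined by a $2\times 2$ block acting on $(E,I)$, whose trace is $-(\gamma+\mu+2\lambda)$ and whose determinant equals $(\gamma+\lambda)(\mu+\lambda)(1-R_{0})$; Routh--Hurwitz then yields local asymptotic stability of $P_{0}$ when $R_{0}<1$ and instability (one positive eigenvalue) when $R_{0}>1$.

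The hardest step will be the stability of $P^{*}$ when $R_{0}>1$, which is what makes the statement that the endemic equilibrium is actually ``obtained'' meaningful. The characteristic polynomial is genuinely quartic in $\beta,\gamma,\mu,\lambda,\delta,N$, but once more the $R$-row decouples with eigenvalue $-\lambda$, so the effective task is on a $3\times 3$ block; after substituting the explicit coordinates \eqref{eq7:SEIR_BDeq}--\eqref{eq9:SEIR_BDeq}, the Hurwitz minors can be shown to be positive precisely under the hypothesis $R_{0}>1$. This algebraic verification of the Routh--Hurwitz inequalities is the only place where the proof is more than inspection of a $2\times 2$ spectrum or algebraic manipulation of formulas already derived above.
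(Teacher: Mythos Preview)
Your derivation of $R_{0}$ via the next-generation matrix is exactly the paper's proof: the same choice of infected compartments $(E,I)$, the same matrices $F$ and $V$ evaluated at $P_{0}$, and the same identification of $R_{0}$ as the dominant eigenvalue of $FV^{-1}$. On that core computation there is nothing to compare.

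Where you diverge is that you actually go further than the paper. The paper's proof stops once $R_{0}$ is exhibited as the spectral radius of $FV^{-1}$; it does not verify the biological admissibility of $P^{*}$, does not linearise at either equilibrium, and does not invoke Routh--Hurwitz. In effect the paper treats the threshold dichotomy as part of the standard next-generation package (implicitly appealing to the van den Driessche--Watmough theorem for local stability of $P_{0}$), while you supply the Jacobian computations explicitly and, in addition, propose to check stability of $P^{*}$ directly. Your extra work buys a self-contained argument that does not lean on the cited general theory, and it also addresses the endemic equilibrium, which the next-generation theorem by itself says nothing about. The cost is the $3\times 3$ Routh--Hurwitz verification you flag as the hardest step; the paper simply does not attempt it.
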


\begin{proof}
For computing the basic reproduction ratio $R_0$, 
we apply the next generation method \cite{DiekmannR0,HeffernanR0}.  
Assume that there are $n$ infective classes in the model and 
define the vector $\bar{x}=x_i$, where $x_i$, $i=1,2,\dots,n$, 
denotes the number or the proportion of individuals in the $i$th infective class. 
Let $F_i(\bar{x})$ be the rate of appearance of new infections in the 
$i$th class and let $V_i(\bar{x})=V_i^{-}(\bar{x})-V_i^{+}(\bar{x})$,  
where $V_i^{+}$ consists of transfer of individuals into class $i$
and $V_i^{^-}$ consists of transfer of individuals out of class $i$. 
The difference $F_i(\bar{x})-V_i(\bar{x})$  gives the rate of change 
of $x_i$. Notice that $F_i$ consists of new infections from susceptible,  
whereas $V_i$ includes the transfer of infected individuals 
from one infected class to another \cite{HeffernanR0}.
We can then form the next generation matrix 
from the partial derivatives of $F_i$ and $V_i$:
\begin{align*}
F=\left[ \dfrac{\partial F_i(x_0)}{\partial x_j}\right],  
\quad V=\left[ \dfrac{\partial V_i(x_0)}{\partial x_j}\right], 
\end{align*}
where $i,j=1,2,\dots,n$ and $x_0$ is the initial condition of the epidemic.
The basic reproduction ratio  $R_0$ is given  by  the dominant eigenvalue 
of the matrix $FV^{-1}$ \cite{HeffernanR0}. Applying
the next generation method to the SEIR model \eqref{eq1:SEIR_BD},  
and since we are only concerned with individuals that spread 
the infection, we only need to model the exposed, $E$, and infected, $I$, classes. 
Let us define the model dynamics using the equations
\begin{equation*}
\begin{cases}
\dfrac{dE(t)}{dt} = \beta S(t)I(t) - \left(\gamma + \lambda\right) E(t),\\[0.25cm]
\dfrac{dI(t)}{dt} = \gamma E(t) - \left(\mu + \lambda \right) I(t).
\end{cases}
\end{equation*}
For this system, 
\[
F = \left( 
\begin{array}{cc}
0 & \dfrac{\beta N \delta}{\lambda} \\ [0.25cm]
0 & 0 
\end{array} \right),
\] 
where $\delta$ is the birth rate and $\lambda$ is the death rate, and 
\[
V =\left( 
\begin{array}{cc}
\gamma + \lambda & 0 \\ [0.25cm]
-\gamma & \mu + \lambda \end{array} \right).
\] 
Then, 
\[
FV^{-1} = \left( 
\begin{array}{cc}
\dfrac{\beta N \delta \gamma}{\left(\gamma + \lambda\right)\left(\mu 
+ \lambda \right)\lambda} & \dfrac{\beta N \delta }{\left(\mu 
+ \lambda \right)\lambda}  \\ [0.35cm]
0 & 0 \end{array} \right).
\] 
The dominant eigenvalue $R_0$ of $FV^{-1}$ 
is given by expression \eqref{eq:R0}. 
\end{proof}

The recent status of the Ebola virus corresponds to the case $R_0>1$ \cite{who},
which we study numerically in Section~\ref{sec:Sim:SEIR:de}.


\subsection{Simulation of the SEIR model with demographic effects}
\label{sec:Sim:SEIR:de}

Now we solve numerically the SEIR model with vital dynamics \eqref{eq1:SEIR_BD}
by using the parameters presented in the work of Rachah and Torres
\cite{MyID:321,MyID:340}. The early detection of Ebola virus in West Africa
is characterised by $R_0 = 1.95$. Then the Ebola virus is really
an epidemic, invading the populations of West Africa. The parameters
$\beta=0.2$, $\gamma=0.1887$ and $\mu=0.1$, studied by Rachah and Torres
\cite{MyID:321,MyID:340}, are based on the fact that 88\% of population is susceptible,
7\% of population is exposed (infected but not infectious), and 5\% of population
is infectious. In agreement, the initial susceptible, exposed, infectious
and recovered populations, are given respectively by
\begin{equation}
\label{eq:ic}
S(0)=0.88, \quad E(0)=0.07, \quad I(0)=0.05, \quad R(0)=0. 
\end{equation}
In the numerical resolution of the model, 
we take the birth rate $\delta=0.03507$
and the death rate $\lambda=0.0099$ \cite{brauer,indexmundi},
with the other parameters and initial conditions as introduced
by Rachah and Torres \cite{MyID:321,MyID:340}. The birth and death rates
are obtained from the data of population of Liberia in 2014 \cite{indexmundi}.

Figure~\ref{BD_fig1:4} shows the evolution of individuals
over time. We see that the oscillations in the numbers of
compartments $S$, $E$ and $I$ damp out over time,
eventually reaching an equilibrium, respectively $S^*$,
$E^*$ and $I^*$. When we calculate the value of the 
theoretical result \eqref{eq8:SEIR_BDeq}, we find that $S^*=0.57$,
which is equal to the $S^*$ computed by the numerical resolution of the model:
see Figure~\ref{BD_fig1:SEIR}. Figure~\ref{BD_fig2:SEIR} shows the evolution
of the exposed individuals $E(t)$ over time. Note that
the equilibrium $E^*$ is given by \eqref{eq7:SEIR_BDeq}.
When we calculate the value of this theoretical result, we find $E^*=0.14$,
which is equal to the $E^*$ computed by the numerical resolution of the model.
Figure~\ref{BD_fig3:SEIR} shows the evolution of the infected
individuals $I(t)$ over time. In this case the equilibrium
$I^*$ is computed theoretically by \eqref{eq9:SEIR_BDeq},
which agrees with $I^* = 0.25$ computed by the numerical resolution of the model.
Finally, Figure~\ref{BD_fig4:SEIR} shows the evolution of the recovered individuals
$R(t)$ over time. Similarly as before, we found that the equilibrium $R^* = 2.51$,
computed theoretically by \eqref{eq:SEIR_BDeq:R*}, coincides with the numerical resolution
of the model. The fact that the reached equilibrium $(S^*,  E^*, I^*, R^*)$,
computed theoretically, is equal to the value found by the numerical simulation,
is a validation of our study of the SEIR model with vital dynamics.
\begin{figure}[ht]
\centering
\subfloat[Evolution of $S(t)$ with $S^*=0.57$]{
\label{BD_fig1:SEIR}\includegraphics[width=0.5\textwidth]{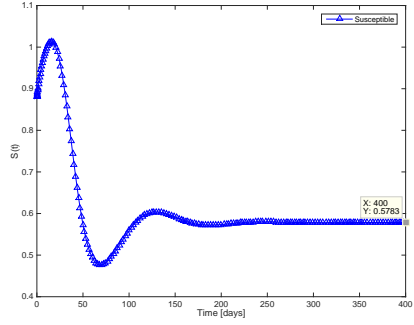}}
\subfloat[Evolution of $E(t)$ with $E^*=0.14$]{
\label{BD_fig2:SEIR}\includegraphics[width=0.5\textwidth]{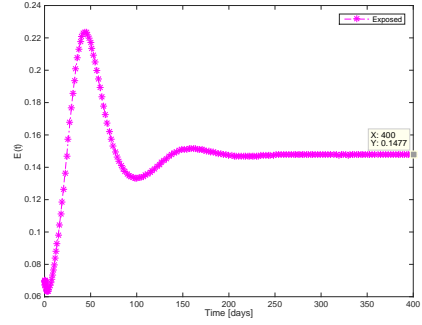}}\\
\subfloat[Evolution of $I(t)$ with $I^*=0.25$]{
\label{BD_fig3:SEIR}\includegraphics[width=0.5\textwidth]{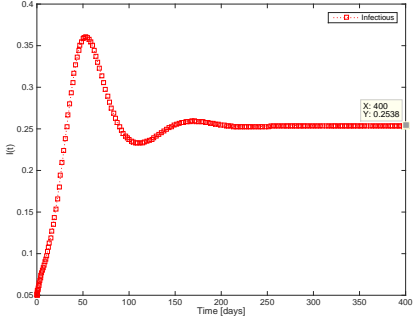}}
\subfloat[Evolution of $R(t)$ with $R^*=2.51$]{
\label{BD_fig4:SEIR}\includegraphics[width=0.5\textwidth]{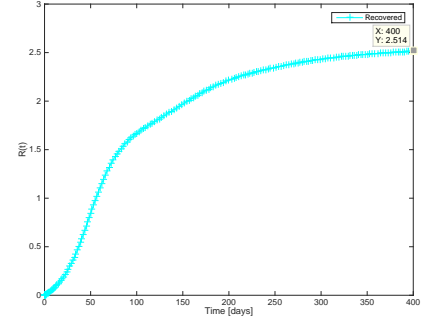}}
\caption{Evolution of individuals in compartments
$S(t)$, $E(t)$, $I(t)$, $R(t)$ of the SEIR model
\eqref{eq1:SEIR_BD} with vital dynamics, where the
endemic equilibrium is given by
$\left(S^*(t), E^*(t), I^*(t), R^*(t)\right)
=\left(0.57, 0.14, 0.25, 2.51\right)$.}
\label{BD_fig1:4}
\end{figure}

\section{Control of the virus with demographic effects through vaccination}
\label{sec:4}

According to recent news, a vaccine against Ebola virus
is ongoing mass production, to be used in countries
like Guinea, Sierra Leone and Liberia \cite{news:china,Reuters}.
Motivated by this fact, we now present
a strategy for the control of the virus
by introducing into the model \eqref{eq1:SEIR_BD}
a control $u(t)$ representing the vaccination rate at time $t$.
Precisely, the control $u(t)$ is the fraction of susceptible individuals
being vaccinated at time $t$. Then, the mathematical model with control
is given by the system of nonlinear differential equations
\begin{equation}
\label{SEIR_control}
\begin{cases}
\dfrac{dS(t)}{dt} = \delta N -\beta S(t)I(t)-\lambda S(t) - u(t) S(t),\\[0.25cm]
\dfrac{dE(t)}{dt} = \beta S(t)I(t) - \gamma E(t) - \lambda E(t),\\[0.25cm]
\dfrac{dI(t)}{dt} = \gamma E(t) - \mu I(t) -\lambda I(t),\\[0.25cm]
\dfrac{dR(t)}{dt} = \mu I(t) - \lambda R(t) + u(t) S(t).
\end{cases}
\end{equation}
The goal of the adopted strategy is to reduce the infected individuals and the 
cost of vaccination on a fixed time interval. Precisely, the optimal control 
problem consists of minimizing the objective functional $J$,
\begin{equation}
\label{cost_func_strat1}
J(I,u) = \int_{0}^{t_{end}} \left[I(t) + \dfrac{\tau}{2}u^2(t)\right] dt
\longrightarrow \min,
\end{equation}
where $u(t)$ is the control variable, which represents the vaccination rate
at time $t$, and the parameters $\tau$ and $t_{end}$ denote, respectively,
the weight on the cost of vaccination and the duration, in days,  
of the vaccination program.
In our study of the control of the virus, we use the parameters
of Section~\ref{sec:Sim:SEIR:de} with $\tau = 0.02$ and $t_{end} = 90$.
For the numerical solution of the optimal control
problem, we have used the \textsf{ACADO} solver \cite{acado},
which is based on a multiple shooting method, including automatic
differentiation and based ultimately on the semidirect multiple shooting
algorithm of Bock and Pitt \cite{acado2}. The \textsf{ACADO} solver comes
as a self-contained public domain software environment, written in \textsf{C++},
for automatic control and dynamic optimization \cite{acado}.
\begin{figure}[ht]
\centering
\subfloat[Susceptible individuals $S(t)$]{
\label{cntrlBD_S}\includegraphics[width=0.5\textwidth]{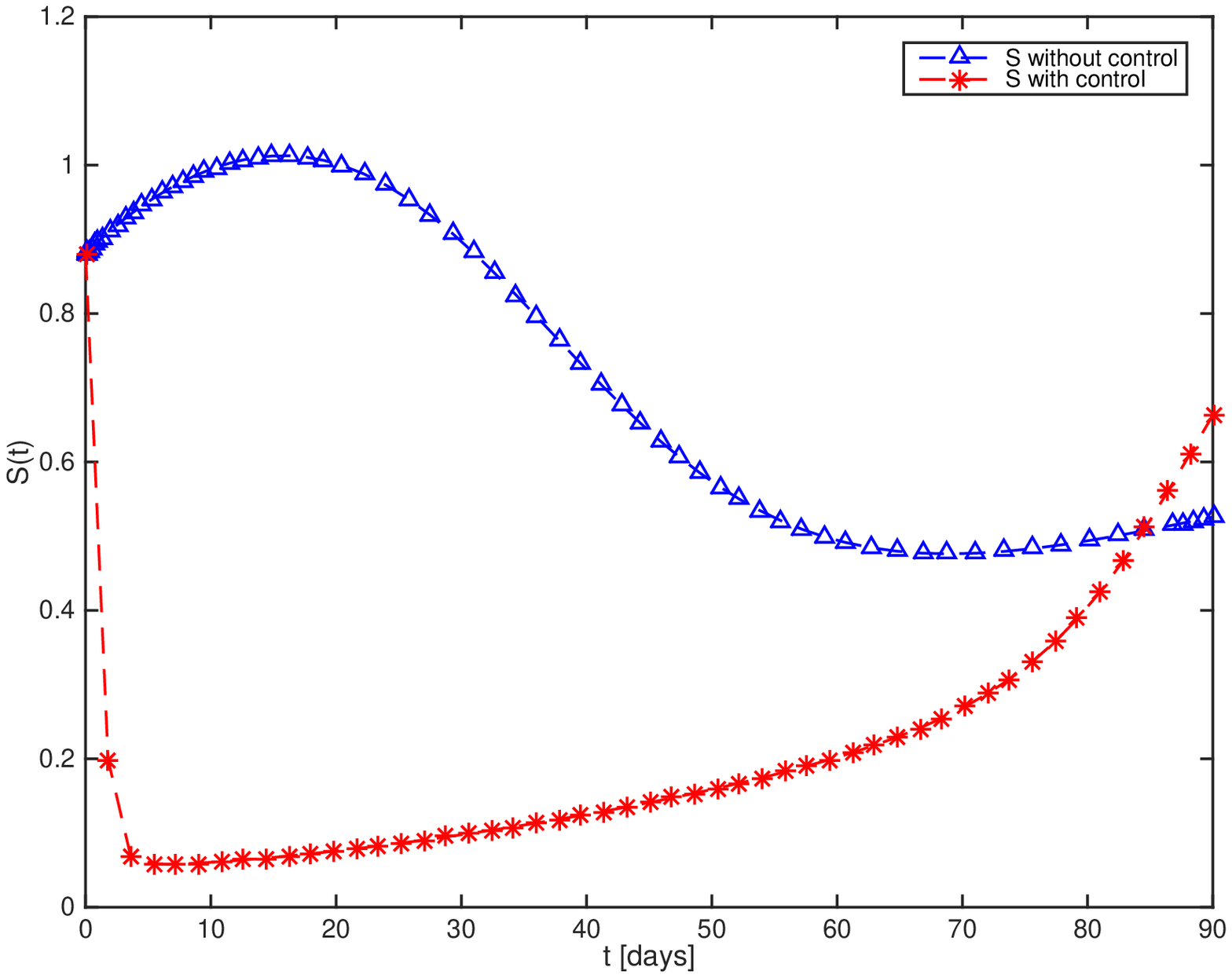}}
\subfloat[Exposed individuals $E(t)$]{
\label{cntrlBD_E}\includegraphics[width=0.5\textwidth]{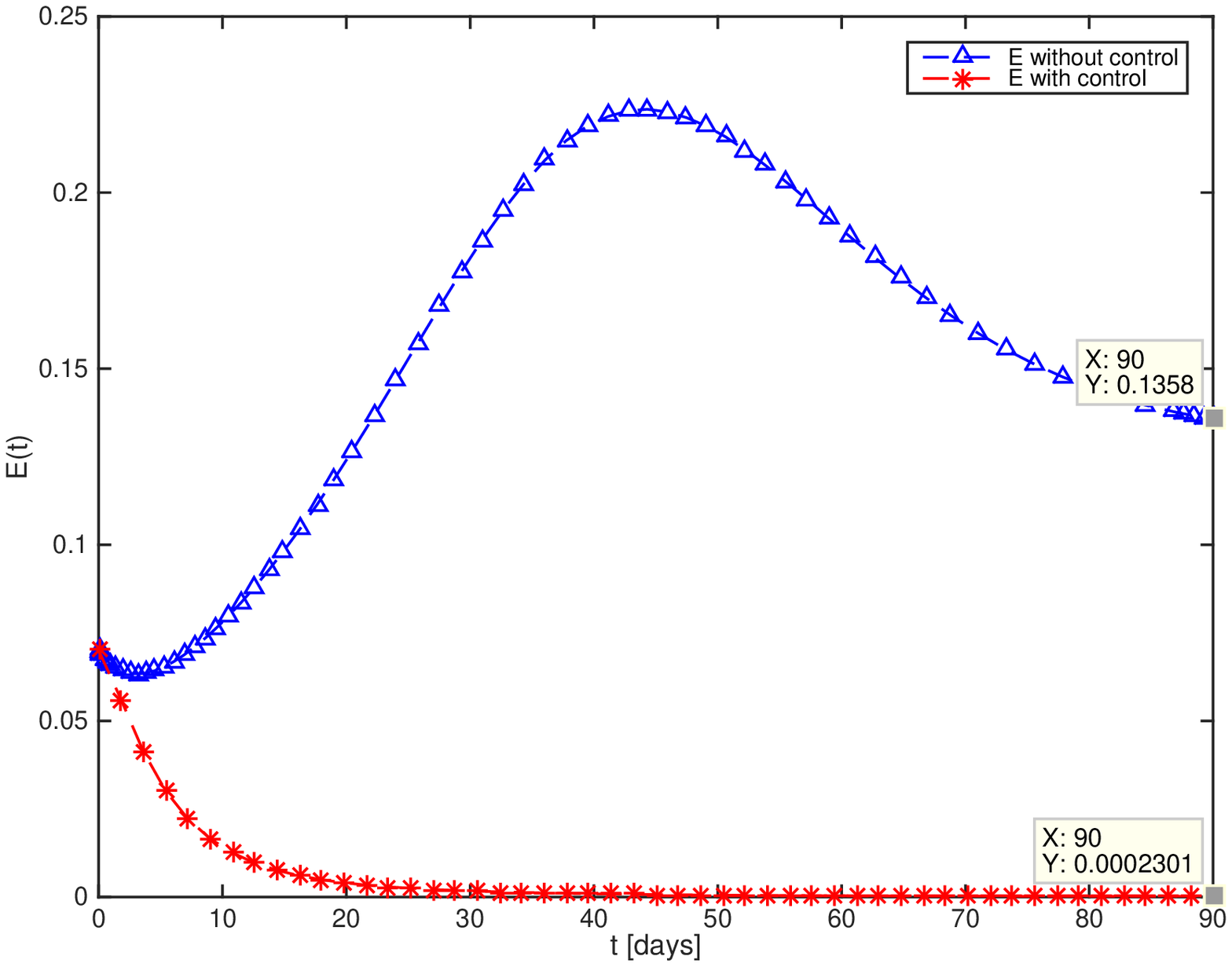}}\\
\subfloat[Infected individuals $I(t)$]{
\label{cntrlBD_I}\includegraphics[width=0.5\textwidth]{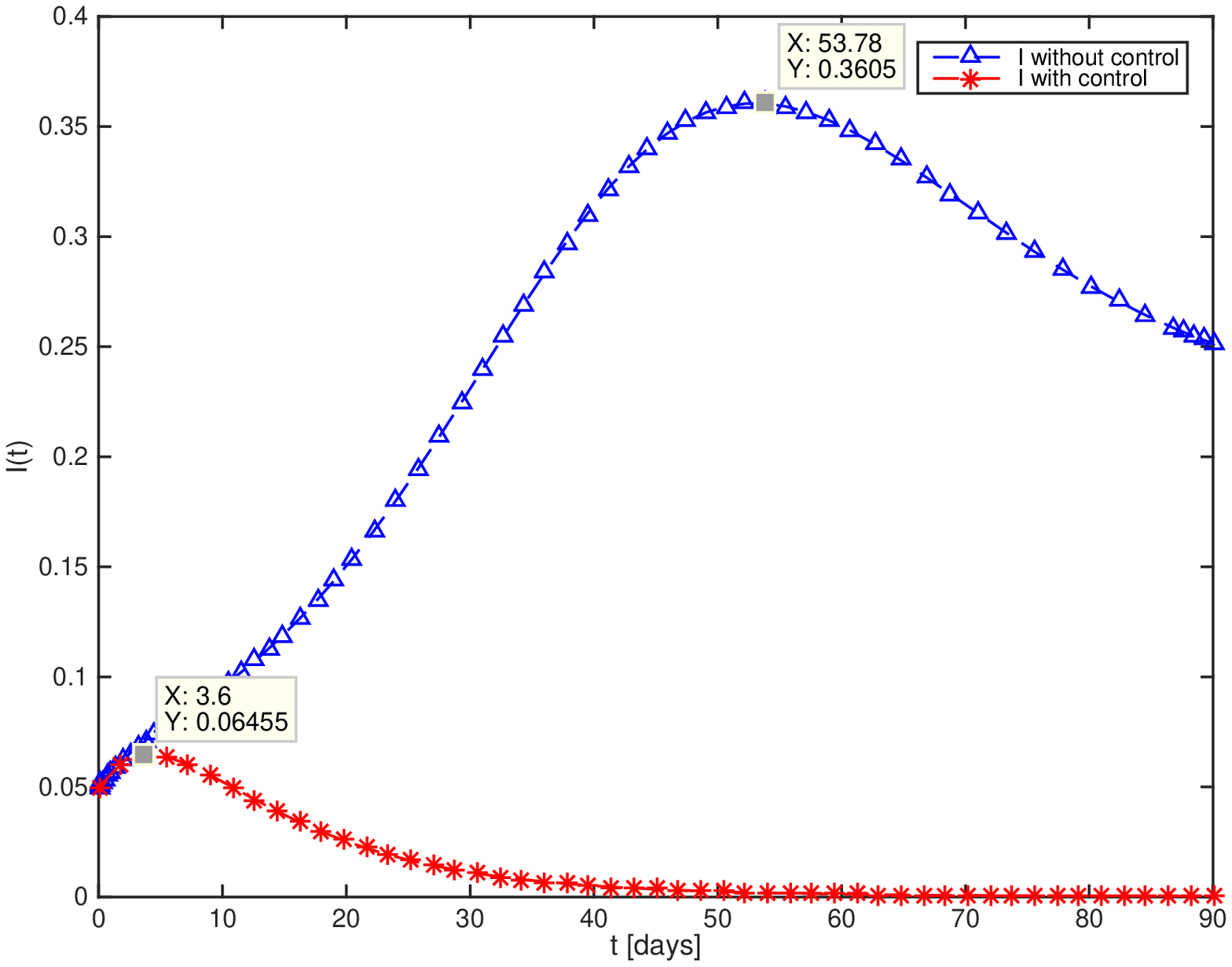}}
\subfloat[Recovered individuals $R(t)$]{
\label{cntrlBD_R}\includegraphics[width=0.5\textwidth]{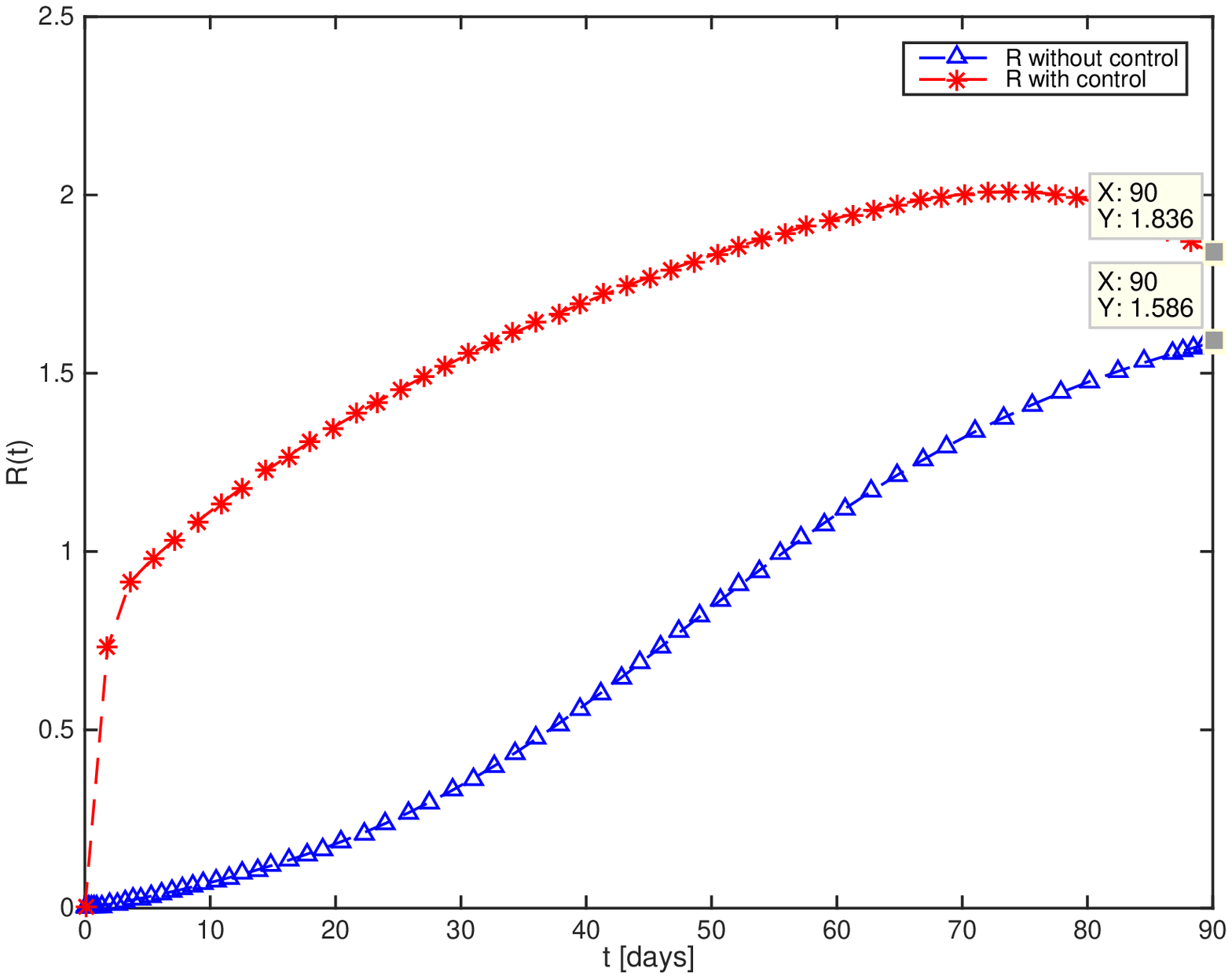}}
\caption{Comparison between the curves of individuals 
$S(t)$, $E(t)$, $I(t)$, $R(t)$, in case of optimal control of
\eqref{SEIR_control}--\eqref{cost_func_strat1} 
\emph{versus} without control \eqref{eq1:SEIR_BD}.}
\label{cntrlBD_SEIR}
\end{figure}
Figure~\ref{cntrlBD_SEIR} shows, respectively, the significant difference 
in the number of susceptible, exposed, infected and recovered individuals, 
with and without control, along time. As expected, the number of susceptible 
individuals $S(t)$ decrease rapidly in case of vaccination 
(see Figure~\ref{cntrlBD_S}), beginning to increase as we decrease vaccination: 
compare Figure~\ref{cntrlBD_S} of $S(t)$ with Figure~\ref{cntrlBD_u}, 
which represents the optimal control function $u(t)$ along time.
\begin{figure}[ht]
\centering
\includegraphics[scale=0.45]{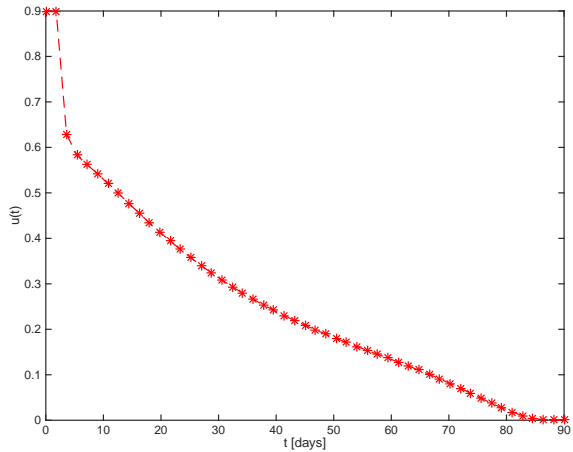}
\caption{The optimal control function $u(t)$
for problem \eqref{SEIR_control}--\eqref{cost_func_strat1}
with initial conditions \eqref{eq:ic}, $t \in [0, t_{end}]$,
$t_{end} = 90$ days, and $\tau = 0.02$.
\label{cntrlBD_u}}
\end{figure}
Figure~\ref{cntrlBD_E} shows that  the number of exposed individuals 
decreases rapidly in case of control. In the same figure, 
the curve of exposed shows that the period of incubation 
of the virus is $22$ days in case of optimal control against more than 
$90$ days  in the absence of any control. In Figure~\ref{cntrlBD_I}, 
the time-dependent curve of infected individuals shows that the peak 
of the curve of infected individuals is less important in presence of control. 
In fact, the maximum value on the infected curve $I$ under optimal control 
is 0.06\%, against 0.36\% without any control (see Figure~\ref{cntrlBD_I}). 
Figure~\ref{cntrlBD_R} shows that the number of recovered individuals
increases rapidly in presence of control. The other important effect of control,
which we can see in Figure~\ref{cntrlBD_R}, is the period of infection, 
which is less important in case of control of the virus. The value 
of the period of infection is $50$ days in case of optimal control, 
in contrast with more than $90$ days without vaccination. 
In conclusion, one can say that Figure~\ref{cntrlBD_SEIR} 
shows the effectiveness of optimal vaccination in controlling Ebola.


\section{Ebola model with vital dynamics and induced death rates}
\label{sec:5}

In this section, we study a second Ebola model with demographic effects 
by increasing the death rates of the exposed and infectious classes 
of the model, that is, by considering induced death rates 
$\lambda_{E}$ and $\lambda_{I}$ associated to the exposed 
and infected individuals, respectively. 


\subsection{Model formulation}
\label{subsec:15}

If we change the previous SEIR model \eqref{SEIR_control} by increasing 
the death rates of the exposed and infectious classes, by adding induced 
death rates $\lambda_{E}$ and $\lambda_{I}$, with $\lambda_{I} > \lambda_{E} > 0$, 
then our SEIR system becomes the one of Figure~\ref{fig2:SEIR2}, 
where we show the relationship between the variables of the new system. 
\begin{figure}[ht]
\centering
\includegraphics[scale=0.4]{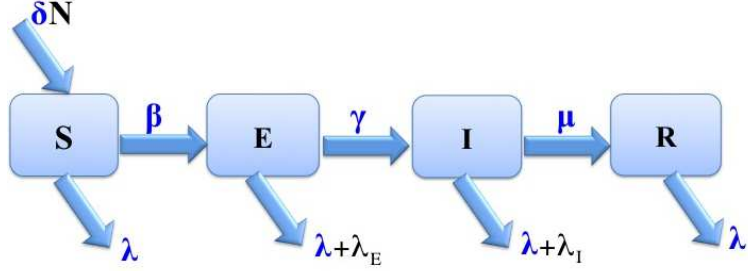}
\caption{Compartment diagram of the SEIR model \eqref{SEIR_BD2}
with induced death rates $\lambda_{E}$ and $\lambda_{I}$
for the exposed and infectious classes, respectively.
\label{fig2:SEIR2}}
\end{figure}
Mathematically, the SEIR model with induced death rates 
is described by the following system of equations:
\begin{equation}
\label{SEIR_BD2}
\begin{cases}
\dfrac{dS(t)}{dt} = \delta N  -\beta S(t)I(t) -\lambda S(t),\\[0.25cm]
\dfrac{dE(t)}{dt} = \beta S(t)I(t) - \gamma E(t)
- \left(\lambda+\lambda_{E}\right) E(t),\\[0.25cm]
\dfrac{dI(t)}{dt} = \gamma E(t) - \mu I(t)
- \left(\lambda+\lambda_{I}\right)  I(t),\\[0.25cm]
\dfrac{dR(t)}{dt} = \mu I(t) - \lambda R(t).
\end{cases}
\end{equation}
Next, we study \eqref{SEIR_BD2}, which we then show to describe in
a very accurate way the recent reality of Ebola in Liberia. 


\subsection{Analysis of the equilibria}
\label{subsec:25}

Let us find the equilibria points of the system of equations
\eqref{SEIR_BD2}. By setting the right-hand side of \eqref{SEIR_BD2}
 to zero, we find that
\begin{equation}
\label{eq1:SEIR_BDeq2}
 \delta N  -\beta SI -\lambda S = 0,
\end{equation}
\begin{equation}
\label{eq2:SEIR_BDeq2}
\beta SI - \gamma E  -\left(\lambda+\lambda_{E}\right) E = 0,
\end{equation}
\begin{equation}
\label{eq3:SEIR_BDeq2}
\gamma E - \mu I  -\left(\lambda+\lambda_{I}\right) I = 0,
\end{equation}
\begin{equation}
\label{eqR:SEIR_BDeq2}
\mu I  -\lambda R = 0.
\end{equation}
By adding \eqref{eq1:SEIR_BDeq2} and \eqref{eq2:SEIR_BDeq2}, we obtain that
\begin{align}
\label{eq4:SEIR_BDeq2}
S = \dfrac{ \delta N -\left(\gamma + \lambda + \lambda_{E}\right)E}{\lambda}.
\end{align}
From \eqref{eq3:SEIR_BDeq2} we obtain that
\begin{align}
\label{eq5:SEIR_BDeq2}
I = \dfrac{\gamma E}{\mu + \lambda + \lambda_{I}},
\end{align}
while from \eqref{eqR:SEIR_BDeq2} it follows that
\begin{align}
\label{eqR1:SEIR_BDeq2}
R = \dfrac{\mu}{\lambda}I.
\end{align}
Using \eqref{eq4:SEIR_BDeq2} and \eqref{eq5:SEIR_BDeq2} in 
\eqref{eq2:SEIR_BDeq2}, one gets
\begin{equation*}
E \left(\dfrac{\beta\left(\delta N -(\gamma +\lambda+ \lambda_{E}) E\right)
\gamma}{\lambda\left(\mu +\lambda+ \lambda_{I}\right)}
- \left(\gamma +\lambda+ \lambda_{E}\right)\right)  = 0,
\end{equation*}
that is, or $E=0$ or
\begin{equation}
\label{eq6:SEIR_BDeq2}
E = \dfrac{\beta\gamma\delta N - \left(\gamma + \lambda
+ \lambda_{E}\right)\left(\mu + \lambda+ \lambda_{I}\right)\lambda}
{\beta \left(\gamma + \lambda+ \lambda_{E}\right)\gamma}.
\end{equation}
If $E=0$, then from \eqref{eq4:SEIR_BDeq2} we obtain $S=(\delta N)/\lambda$,
while from \eqref{eq5:SEIR_BDeq2} $I=0$, which implies by
\eqref{eqR1:SEIR_BDeq2} that $R = 0$. Concluding, 
the virus free equilibrium is
\begin{equation}
\label{eq:DFE2}
P_0=\left(\dfrac{\delta N}{\lambda},0,0,0\right).
\end{equation}
The endemic equilibrium $E^*$ is given by \eqref{eq6:SEIR_BDeq2}, that is,
\begin{align}
\label{eq7:SEIR_BDeq2}
E^* =  \dfrac{\delta N }{\gamma + \lambda+ \lambda_{E}}
- \dfrac{\lambda\left(\mu +\lambda+ \lambda_{I}\right)}{\beta \gamma}.
\end{align}
By using $E$ given by \eqref{eq7:SEIR_BDeq2} in \eqref{eq4:SEIR_BDeq2}, we get
\begin{align}
\label{eq8:SEIR_BDeq2}
S^*=\dfrac{\left(\gamma +\lambda+ \lambda_{E}\right)\left(\mu
+\lambda + \lambda_{I}\right)}{\beta \gamma},
\end{align}
while substituting $E$ given by \eqref{eq7:SEIR_BDeq2}
into \eqref{eq5:SEIR_BDeq2}, we obtain that
\begin{align}
\label{eq9:SEIR_BDeq2}
I^* = \dfrac{\gamma \delta N }{\left(\mu +\lambda
+ \lambda_{I}\right)\left(\gamma +\lambda+ \lambda_{E}\right)}
 - \dfrac{\lambda}{\beta}.
\end{align}
Finally, using \eqref{eqR1:SEIR_BDeq2}, one gets
\begin{align}
\label{eq9:SEIR_BDeq2R}
R^* = \dfrac{\mu}{\lambda}I^* = \dfrac{\mu}{\lambda}\left[
\dfrac{\gamma \delta N }{\left(\mu +\lambda
+ \lambda_{I}\right)\left(\gamma +\lambda+ \lambda_{E}\right)}
- \dfrac{\lambda}{\beta}\right].
\end{align}
The  equilibrium point is then $P^*=\left(S^*,E^*,I^*,R^*\right)$
with expressions for $S^*$, $E^*$, $I^*$ and $R^*$ given by 
\eqref{eq7:SEIR_BDeq2}--\eqref{eq9:SEIR_BDeq2R}. 

\begin{theorem}
\label{thm2}
Let $S(t)$, $E(t)$, $I(t)$, $R(t)$  be a solution of the SEIR model 
\eqref{SEIR_BD2}. Then the basic reproduction ratio is given by 
\begin{align}
\label{eq:R0:2}
R_0 := \dfrac{\beta \gamma \delta N}{\left(\mu +\lambda
+ \lambda_{I}\right)\left(\gamma +\lambda+ \lambda_{E}\right)\lambda}.
\end{align}
\begin{itemize}
\item If $R_0>1$, then the  equilibrium $P^*=\left(S^*,E^*,I^*,R^*\right)$ 
of the virus is obtained, in agreement with expressions
\eqref{eq7:SEIR_BDeq2}--\eqref{eq9:SEIR_BDeq2R}. 
In this case, the virus is able to invade the population.

\item If $R_0<1$, then the disease free equilibrium $P_0$ \eqref{eq:DFE2} 
is obtained. It corresponds to the case when the virus dies out (no epidemic).
\end{itemize}
\end{theorem}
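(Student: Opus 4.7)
The plan is to mimic the proof of Theorem~\ref{thm1}, applying the next generation matrix method of \cite{DiekmannR0,HeffernanR0} to the extended system \eqref{SEIR_BD2}. Since only $E$ and $I$ represent infected compartments, I would restrict attention to the two middle equations of \eqref{SEIR_BD2}, identifying the new-infection term $\beta S I$ (appearing only in the $E$ equation) as the single nonzero entry of $\mathcal{F}$, while collecting the remaining linear transfer terms $(\gamma+\lambda+\lambda_E)E$ in the $E$ compartment and $-\gamma E + (\mu+\lambda+\lambda_I)I$ in the $I$ compartment into $\mathcal{V}$.

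Next, I would linearize at the disease-free equilibrium $P_0=(\delta N/\lambda,0,0,0)$ from \eqref{eq:DFE2} to obtain
\begin{equation*}
F = \begin{pmatrix} 0 & \dfrac{\beta \delta N}{\lambda} \\ 0 & 0 \end{pmatrix},
\qquad V = \begin{pmatrix} \gamma + \lambda + \lambda_E & 0 \\ -\gamma & \mu + \lambda + \lambda_I \end{pmatrix}.
\end{equation*}
Because $V$ is lower triangular, its inverse is immediate, and a direct multiplication yields an upper-triangular $FV^{-1}$ whose only nonzero diagonal entry equals the expression claimed for $R_0$ in \eqref{eq:R0:2}. Since the spectral radius of a triangular matrix is the maximum absolute value of its diagonal, this gives the desired formula.

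For the dichotomy, I would observe from \eqref{eq7:SEIR_BDeq2} that $E^* > 0$ is equivalent to $\beta \gamma \delta N > (\gamma+\lambda+\lambda_E)(\mu+\lambda+\lambda_I)\lambda$, that is, to $R_0 > 1$; the same inequality makes $I^*$ in \eqref{eq9:SEIR_BDeq2} and hence $R^*$ in \eqref{eq9:SEIR_BDeq2R} positive, yielding the biologically meaningful endemic equilibrium $P^*$. Conversely, when $R_0 < 1$ these components become negative, so only $P_0$ is biologically admissible and the virus dies out. I do not foresee any serious obstacle: both the next-generation construction and the positivity analysis parallel Theorem~\ref{thm1} essentially verbatim, and the only delicate point is careful bookkeeping so that the factors $\gamma + \lambda$ and $\mu + \lambda$ appearing in Theorem~\ref{thm1} are consistently replaced by $\gamma + \lambda + \lambda_E$ and $\mu + \lambda + \lambda_I$ throughout.
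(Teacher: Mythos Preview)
Your proposal is correct and follows essentially the same approach as the paper: the paper's proof likewise applies the next generation method to the infected compartments $E$ and $I$, writes down the identical matrices $F$ and $V$ (with $S_0=\delta N/\lambda$ inserted), and reads off $R_0$ as the dominant eigenvalue of $FV^{-1}$. In fact your write-up is more complete, since the paper's proof does not spell out the positivity argument for the $R_0>1$ versus $R_0<1$ dichotomy that you supply.
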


\begin{proof}
We apply the next generation method to the model \eqref{SEIR_BD2}. Here, 
\[
F = \left( \begin{array}{cc}
0 & \dfrac{\beta N \delta}{\lambda} \\ [0.25cm]
0 & 0 \end{array} \right),
\] 
where $\delta$ is the birth rate and $\lambda$ is the death rate of susceptible,
and 
\[
V =\left( \begin{array}{cc}
\gamma + \lambda +\lambda_{E} & 0 \\ [0.25cm]
-\gamma & \mu + \lambda  +\lambda_{I}\end{array} \right).
\] 
Then, $R_0$ is the dominant eigenvalue of $FV^{-1}$, 
which is given by \eqref{eq:R0:2}.
\end{proof}

Note that if $\lambda_I = \lambda_E = 0$, then Theorem~\ref{thm2} 
reduces to Theorem~\ref{thm1}.


\subsection{SEIR model with demographic effects and induced 
death rates, and Liberia's 2014 Ebola outbreak}
\label{subsec:35}

Now, we present a modeling study of the real outbreak of Ebola virus occurred 
in Liberia in 2014 by using World Health Organization (WHO) data. 
The epidemic data used in this study is available at \cite{WHO:Lib:2014}.
Let us start by the analysis of the parameters of the SEIR model  
with demographic effects and induced death rates. 
The birth rate $\delta=0.03507$ and death rate $\lambda=0.0099$ 
of the model are obtained from the specific statistical data 
of the demography of Liberia, available at \cite{indexmundi}.  
To estimate the parameters $\beta$, $\gamma$, $\mu$, $\lambda_{E}$ and $\lambda_{I}$, 
we adapted the initialization of $I$  with the reported data of WHO by
fitting the real data of confirmed cases of infectious in Liberia. 
The result of fitting is shown in Figure~\ref{I_sml_data_model_induced}. 
\begin{figure}[ht]
\centering
\includegraphics[scale=0.45]{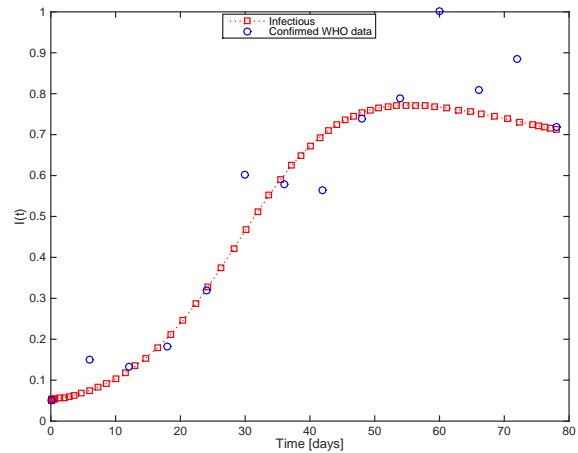}
\caption{Graph of infected  obtained from \eqref{SEIR_BD2} 
and \eqref{eq:ic_induced} versus the
real data of confirmed cases for the 2014 Ebola outbreak occurred in
Liberia \cite{WHO:Lib:2014}.\label{I_sml_data_model_induced}}
\end{figure}
The comparison between the curve of infectious obtained by our simulation 
and the reported data of confirmed cases by WHO shows that
the mathematical model \eqref{SEIR_BD2} fits well the real data by using
$\beta=0.299$ as the rate of transmission, $\gamma=0.034$ as the rate of infectious, 
$\mu=0.0859$ as the recovery rate, and $\lambda_{E}=0.0003366$ and  
$\lambda_{I}=0.0031$ as induced death rates. 
To measure the goodness of fit, we have used a deterministic 
approach for the estimation of the parameters.
Precisely, our fitting procedure has used a 
least squares method of the nonlinear system of ordinary 
differential equations that describes the model.
According to the definition of the least squares method, 
the best-fit curve is the one that provides a minimal 
squared sum of deviation from real data. In our case,
the fitting procedure is associated with the numerical resolution 
of the nonlinear system of ordinary differential equations \eqref{SEIR_BD2}
that describes the model. Accurately, the parameters were estimated by solving
the following nonlinear programming optimization problem:
\begin{eqnarray}
\label{optim}
\begin{array}{ll}
\mbox{minimize} & \mathcal{D} = 
\displaystyle\sum_{j=1}^{n} 
\left(I_{\text{real}, j} - I_{\text{siml}, j}\right)^2 \\
\mbox{subject to} & \mbox{equations of the model} \,  \eqref{SEIR_BD2},
\end{array}
\end{eqnarray}
where  $I_{\text{real}, j}$ corresponds to real data 
and $I_{\text{siml}, j}$ to the one obtained from 
the resolution of the nonlinear system of ordinary equations \eqref{SEIR_BD2}. 
The goodness of fit is measured by computing the value of the objective 
function $\mathcal{D}$ of \eqref{optim}, which is in our case
equal to $0.0002$. By comparing the values of $\lambda_{E}$ 
and $\lambda_{I}$ with the value of the death rate $\lambda$, 
we remark that $\lambda_{E} = 0.0340 \lambda =3.4 \% \lambda$ 
and $\lambda_{I} = 0.3131 \lambda =31.31 \% \lambda$. 
By using the value of Liberia's population, which is estimated 
at $P=4.4$ million in $2014$, and  the number of confirmed 
infectious cases obtained from WHO, which is given by $d=230000$, 
we fix $I(0)=\dfrac{d}{P}$, meaning that the confirmed number 
of infectious cases represents $5.23\%$ of the total population. 
The initial susceptible, exposed, infectious and recovered populations, 
are given respectively by
\begin{equation}
\label{eq:ic_induced}
S(0)=0.8977, \quad E(0)=0.05, \quad I(0)=0.0523, \quad R(0)=0. 
\end{equation}
Figure~\ref{I_sml_data_model_induced} gives 
the curve of infectious individuals simulated 
with \eqref{SEIR_BD2} subject to \eqref{eq:ic_induced} 
and the one obtained from the WHO real data. Note that the choice of $I(0)$ 
in \eqref{eq:ic_induced} is in agreement with WHO's data shown in 
Figure~\ref{I_sml_data_model_induced}. 
The evolution of all groups of individuals, over
time, is shown in Figure~\ref{fig:BD:SEIR2}.  
\begin{figure}[ht]
\centering
\subfloat[Evolution of $S(t)$ with $S(0)=0.8977$]{
\label{BD_fig1:SEIR2_fit}\includegraphics[width=0.5\textwidth]{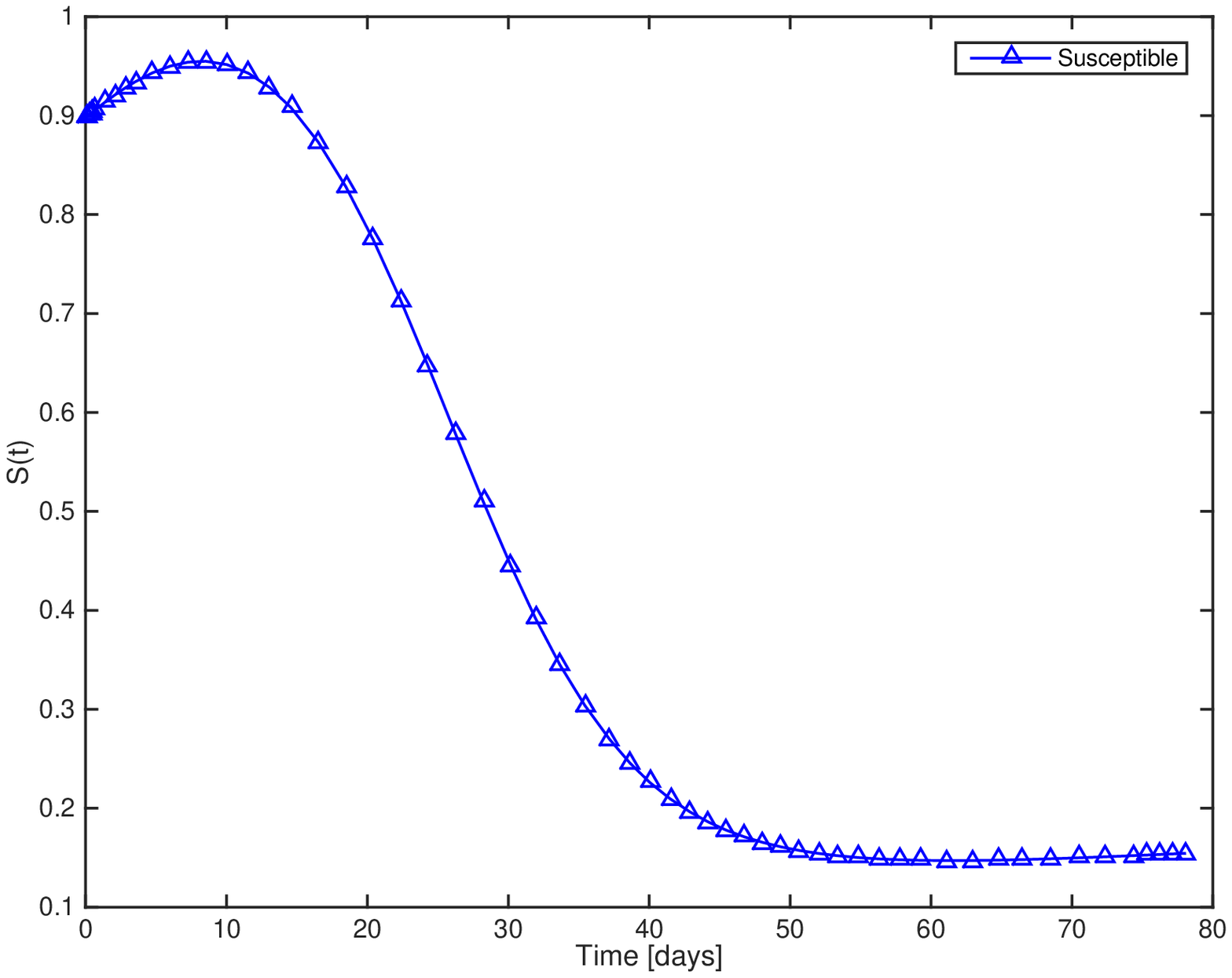}}
\subfloat[Evolution of $E(t)$ with $E(0) = 0.05$]{
\label{BD_fig2:SEIR2_fit}\includegraphics[width=0.5\textwidth]{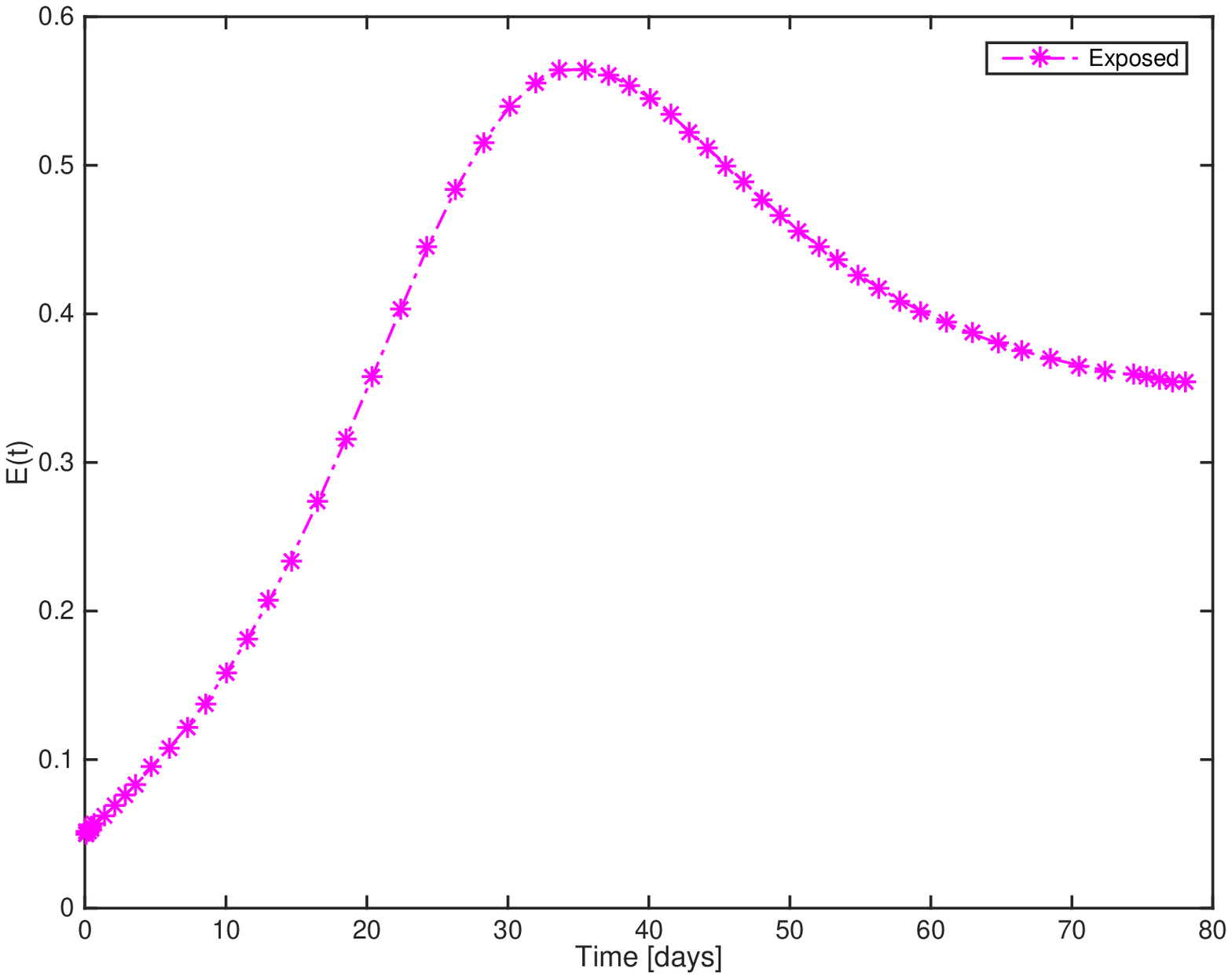}}\\
\subfloat[Evolution of $I(t)$ with $I(0) = 0.0523$]{
\label{BD_fig3:SEIR2_fit}\includegraphics[width=0.5\textwidth]{I_identf.eps}}
\subfloat[Evolution of $R(t)$ with $R(0) = 0$]{
\label{BD_fig4:SEIR2_fit}\includegraphics[width=0.5\textwidth]{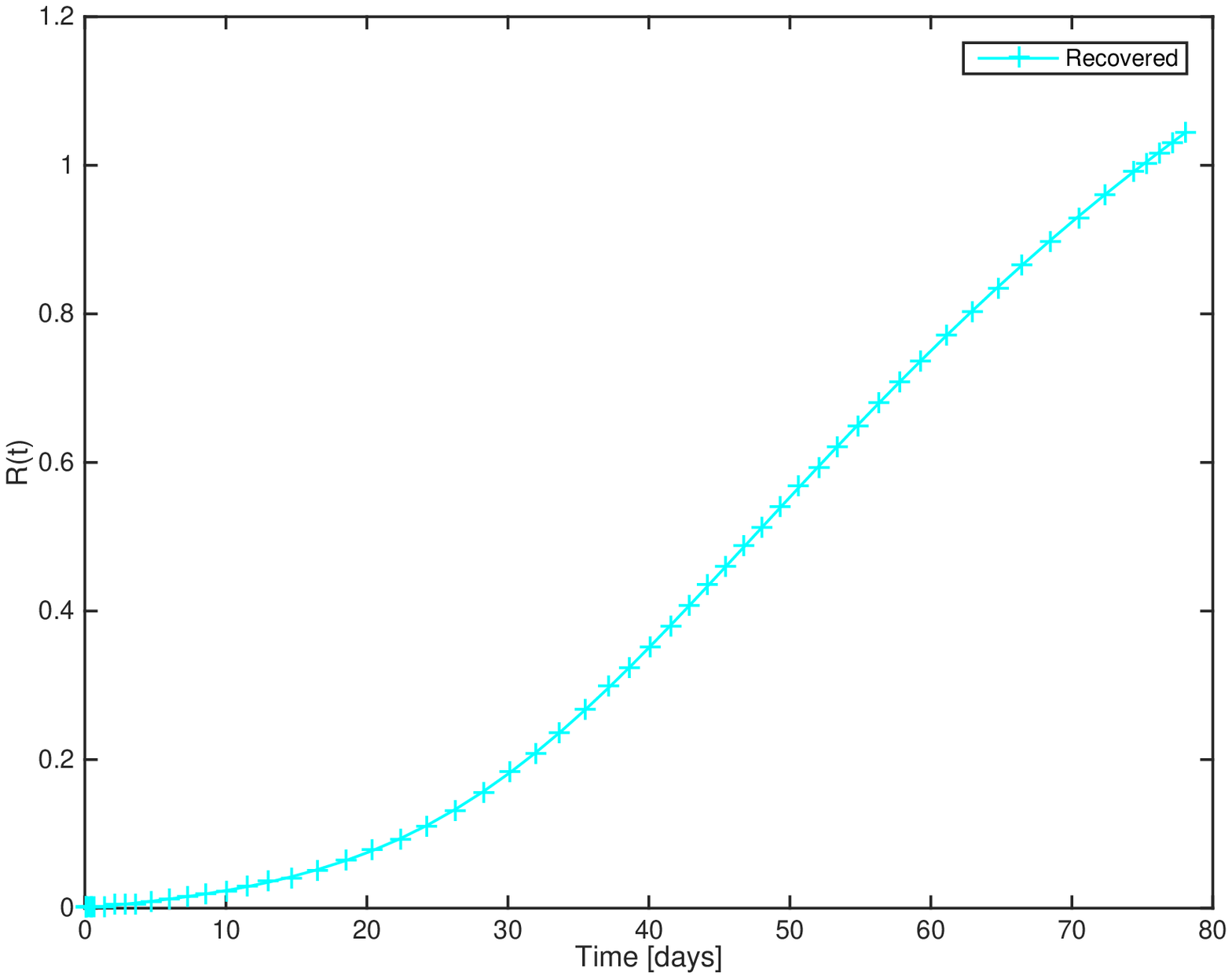}}
\caption{Evolution of individuals in compartments
$S(t)$, $E(t)$, $I(t)$ and $R(t)$ of the SEIR model \eqref{SEIR_BD2} 
with vital dynamics and induced death rates, where the
initial numbers of susceptible, exposed, infectious,
and recovered groups and the parameter values of the model
are described in Section~\ref{subsec:35}.}
\label{fig:BD:SEIR2}
\end{figure}


\section{Conclusion}
\label{sec:conc}

We investigated several SEIR models in the context of the recent 
Ebola outbreak in West Africa. Our aim was to study the properties
and usefulness of SEIR models with respect to Ebola. We began by 
presenting the basic SEIR model and its mathematical analysis. 
Then, we added to the model demographic effects in order to analyze 
the equilibria with vital dynamics. The system of equations 
of the model was solved numerically. The numerical simulations 
confirm the theoretical analysis of the equilibria of the model. 
Moreover, using optimal control, we controlled the propagation 
of the virus through vaccination, reducing 
the number of infected individuals and taking into account 
the cost of vaccination. Finally, we considered a more complete model
with induced death rates and have shown its usefulness with respect
to Liberia's outbreak of 2014.


\section*{Acknowledgements}

This research was partially supported by the Portuguese Foundation
for Science and Technology (FCT): through project UID/MAT/04106/2013
of the Center for Research and Development in Mathematics 
and Applications (CIDMA); and within project TOCCATA, 
Ref. PTDC/EEI-AUT/2933/2014, co-funded by Project 
3599, Promover a Produ\c{c}\~ao Cient\'{\i}fica e Desenvolvimento
Tecnol\'ogico e a Constitui\c{c}\~ao de Redes Tem\'aticas (3599-PPCDT),
and FEDER funds through COMPETE 2020, Programa Operacional
Competitividade e Internacionaliza\c{c}\~ao (POCI).
The authors are grateful to two referees for valuable comments 
and suggestions, which helped them to improve the quality of the paper. 



\end{document}